\documentclass[conference,a4paper]{IEEEtran}

\IEEEoverridecommandlockouts

\usepackage{balance}
\usepackage{amsmath,amssymb,amsthm}
\usepackage{url}
\usepackage{listings}
\usepackage{xcolor}
\usepackage[acronym]{glossaries}

\newacronym{milac}{MiLAC}{microwave linear analog computer}
\newacronym{dft}{DFT}{discrete Fourier transform}
\newacronym{fft}{FFT}{fast Fourier transform}
\newacronym{ai}{AI}{artificial intelligence}

\newtheorem{definition}{Definition}
\newtheorem{theorem}{Theorem}

\definecolor{leanblue}{RGB}{0,80,160}

\lstdefinelanguage{Lean}{
  morekeywords={abbrev, def, noncomputable, fun, _ , theorem, by, inductive, Prop, where},
  sensitive=true,
  morecomment=[l]{--},
  morecomment=[n]{/-}{-/},
  morestring=[b]"}

\begin{document}
\bstctlcite{BSTcontrol}

\title{Toward a Lean Formalization of\\Analog Computing with Microwaves}

\author{\IEEEauthorblockN{
Matteo Nerini\IEEEauthorrefmark{1},
Xuekang Liu\IEEEauthorrefmark{2},
Bruno Clerckx\IEEEauthorrefmark{1}
}
\IEEEauthorblockA{\IEEEauthorrefmark{1}
Department of Electrical and Electronic Engineering, Imperial College London, London, UK, m.nerini20@imperial.ac.uk}
\IEEEauthorblockA{\IEEEauthorrefmark{2}
School of Engineering, Lancaster University, Lancaster, UK}
\thanks{This work has been supported in part by UKRI under Grant EP/Y004086/1, EP/X040569/1, EP/Y037197/1, EP/X04047X/1, EP/Y037243/1.}}

\maketitle

\begin{abstract}
Analog computing with microwave signals can perform linear transformations directly in the analog domain, as the signals propagate through a microwave network.
A fundamental question is which transformations can be computed with a given set of microwave components.
In our previous work, we answered this question for networks of hybrid couplers and phase shifters by deriving a necessary and sufficient condition on the transformations these networks can compute, and we showed that the \gls{dft} satisfies it.
In this paper, we take a first step toward the formalization of analog computing with microwaves in Lean, a programming language and proof assistant that is increasingly adopted in mathematics.
We formalize the considered components, their series and parallel connections, and the class of networks they can implement.
Then, we formally prove the necessary and sufficient condition characterizing these networks, as well as the implementability of the \gls{dft} of any size power of two.
All proofs are checked by the Lean kernel, and the code is openly available at:
\\\url{https://github.com/matteonerini/formalizing-analog-computing}.
\end{abstract}

\vskip0.5\baselineskip
\begin{IEEEkeywords}
Analog computing, formal proofs, hybrid couplers, Lean programming language, microwaves, phase shifters.
\end{IEEEkeywords}

\section{Introduction}

A fundamental problem in analog computing is to characterize all the functions that can be computed with a specific physical system.
In a seminal work, Claude Shannon addressed this problem for the differential analyzer, the mechanical analog computer built by Vannevar Bush at MIT, and showed which functions it could generate \cite{sha41}.
Characterizations of this kind make it possible to understand what a class of analog computers can and cannot do.

The same problem has recently gained relevance for analog computing with microwaves, which is emerging as a complement to digital computing for latency- and energy-critical applications \cite{cal13,sil14,zan21,dar26,del26}.
In general, linear microwave networks can compute a matrix-vector product as the input signals propagate through them, without any digital operation \cite{ner25-1}.
These networks, denoted as \glspl{milac}, can compute specific linear transformations that depend on the components they are built from.
In \cite{ner26}, we characterized all the linear transformations computable by \glspl{milac} made of hybrid couplers and phase shifters, through a necessary and sufficient condition, and we showed that they include the \gls{dft}, Hadamard, and Haar transforms.

Results of this kind are valid over all the networks, of any size and topology, that can be made from the given components.
Therefore, they cannot be confirmed by simulations or measurements, which can only test individual networks.
This makes analog computing a natural target for \textit{formalization}, i.e., for writing definitions and proofs in a proof assistant, whose kernel checks every logical step.
Formalization is gaining momentum in mathematics, where proof assistants such as Lean \cite{dem21} are being used to verify major results \cite{buz25}, increasingly with the help of \gls{ai} (see the Navier-Stokes Millennium Prize Problem \cite{ope26}).
Beyond verification, formalization can accelerate research by turning known results into a machine-readable body of knowledge, on which new results can be reliably built.

In this paper, we take a first step toward the formalization of analog computing with microwaves by formalizing in Lean the results of \cite{ner26}.
Our contributions are as follows.
\textit{First}, we formalize the definitions of hybrid couplers, phase shifters, their series and parallel connections, and the class of networks they can implement (Section~III).
\textit{Second}, we formally prove the necessary and sufficient condition characterizing these networks (Section~IV).
\textit{Third}, we formally prove that the \gls{dft} of any size power of two is implementable with hybrid couplers and phase shifters (Section~V).
The complete Lean code is openly available, and all proofs are checked by the Lean kernel.
To the best of our knowledge, this is the first formalization of analog computing with microwave networks.

\section{Analog Computing with\\Hybrid Couplers and Phase Shifters}

In this section, we recall the \gls{milac} model and characterize the networks that can be built with hybrid couplers and phase shifters, as derived in \cite{ner26}.

A \gls{milac} is a $2N$-port linear microwave network, where the first $N$ ports are the input ports and the last $N$ ports are the output ports.
The $N$ input ports are driven by voltage sources $\mathbf{u}\in\mathbb{C}^{N\times 1}$ with series impedance $Z_0$, e.g., $Z_0=50~\Omega$, while the output voltages $\mathbf{v}\in\mathbb{C}^{N\times 1}$ are read at the $N$ output ports by probes with input impedance $Z_0$.
The $2N$-port \gls{milac} network can be characterized by its scattering matrix $\mathbf{S}\in\mathbb{C}^{2N\times 2N}$ \cite[Chapter~4]{poz12}, partitioned as
\begin{equation}
\mathbf{S}=
\begin{bmatrix}
\mathbf{S}_{11} & \mathbf{S}_{12}\\
\mathbf{S}_{21} & \mathbf{S}_{22}
\end{bmatrix},\label{eq:S}
\end{equation}
where $\mathbf{S}_{XY}\in\mathbb{C}^{N\times N}$, for $X,Y\in\{1,2\}$.
Following circuit theory \cite[Section~II]{ner26}, the output of the \gls{milac} is $\mathbf{v}=\mathbf{S}_{21}\mathbf{u}/2$, i.e., it is a linear function of the input depending on the block $\mathbf{S}_{21}$.
Thus, characterizing the functions computable by a \gls{milac} built from given components corresponds to characterizing the blocks $\mathbf{S}_{21}$ that these components can realize.

\subsection{Hybrid Couplers and Phase Shifters}

We consider two fundamental microwave components, hybrid couplers and phase shifters, together with two components needed to interconnect them: interconnections and permutation networks.
\begin{definition}
An interconnection is a $2$-port network with scattering matrix
\begin{equation}
\mathbf{S}=
\begin{bmatrix}
0 & 1\\
1 & 0
\end{bmatrix}.
\end{equation}
\end{definition}
\begin{definition}
A phase shifter with phase $\phi\in[0,2\pi)$ is a $2$-port network with scattering matrix
\begin{equation}
\mathbf{S}=
\begin{bmatrix}
0 & e^{j\phi}\\
e^{j\phi} & 0
\end{bmatrix}.
\end{equation}
\end{definition}
\begin{definition}
A permutation network reordering the signals at $N$ ports according to the permutation matrix $\mathbf{P}\in\{0,1\}^{N\times N}$ is a $2N$-port network with scattering matrix
\begin{equation}
\mathbf{S}=
\begin{bmatrix}
\mathbf{0} & \mathbf{P}^T\\
\mathbf{P} & \mathbf{0}
\end{bmatrix}.
\end{equation}
\end{definition}
\begin{definition}
A hybrid coupler is a $4$-port network with scattering matrix
\begin{equation}
\mathbf{S}=\frac{1}{\sqrt{2}}
\begin{bmatrix}
0 & 0 & j & 1\\
0 & 0 & 1 & j\\
j & 1 & 0 & 0\\
1 & j & 0 & 0
\end{bmatrix}.
\end{equation}
\end{definition}

All four components are lossless and reciprocal, since their scattering matrices are unitary and symmetric.
In addition, all of them are such that $\mathbf{S}_{11}=\mathbf{0}$ and $\mathbf{S}_{22}=\mathbf{0}$, i.e., there is no reflection at their ports.
We refer to a $2N$-port network with this property as a matched network.
When matched networks are also reciprocal, i.e., their scattering matrix is symmetric, they are fully characterized by their $\mathbf{S}_{21}$ block, which we denote as the transmission scattering matrix $\bar{\mathbf{S}}\in\mathbb{C}^{N\times N}$ \cite[Section~III]{ner26}.
A hybrid coupler has transmission scattering matrix $\bar{\mathbf{S}}=[[j,1]^T,[1,j]^T]/\sqrt{2}$, a phase shifter $\bar{s}=e^{j\phi}$, an interconnection $\bar{s}=1$, and a permutation network $\bar{\mathbf{S}}=\mathbf{P}$.
Since $\mathbf{v}=\mathbf{S}_{21}\mathbf{u}/2$, a \gls{milac} implemented with a matched network computes $\mathbf{v}=\bar{\mathbf{S}}\mathbf{u}/2$.

More complex networks are obtained by connecting networks in series and in parallel.
Two networks are connected in series when the output ports of the first are connected to the input ports of the second, and in parallel when their input and output ports become the input and output ports of a larger network, respectively.
The series and parallel of matched networks are matched networks whose transmission scattering matrices have a simple expression \cite[Corollaries~1,~2]{ner26}, as follows, which we adopt as definitions.
\begin{definition}
The series of two reciprocal matched networks with transmission scattering matrices $\bar{\mathbf{Q}}$ and $\bar{\mathbf{R}}$ is a reciprocal matched network with transmission scattering matrix $\bar{\mathbf{S}}=\bar{\mathbf{R}}\bar{\mathbf{Q}}$.
\label{def:series}
\end{definition}
\begin{definition}
The parallel of two reciprocal matched networks with transmission scattering matrices $\bar{\mathbf{Q}}$ and $\bar{\mathbf{R}}$ is a reciprocal matched network with transmission scattering matrix $\bar{\mathbf{S}}=\text{diag}(\bar{\mathbf{Q}},\bar{\mathbf{R}})$.
\label{def:parallel}
\end{definition}

The networks that can be built from the four components are then defined inductively, as follows.
\begin{definition}
A network is implementable with hybrid couplers and phase shifters if any of the following statements are true
\begin{enumerate}
\item The network is an interconnection,
\item The network is a phase shifter,
\item The network is a permutation network,
\item The network is a hybrid coupler,
\item The network is a series of two networks, both implementable with hybrid couplers and phase shifters,
\item The network is a parallel of two networks, both implementable with hybrid couplers and phase shifters.
\end{enumerate}
\label{def:impl}
\end{definition}

\subsection{Characterization of Implementable Networks}

The following theorem, which is the main result of \cite{ner26}, characterizes all networks implementable with hybrid couplers and phase shifters, and hence all the linear transformations that a \gls{milac} made of these components can compute.
\begin{theorem}
A network is implementable with hybrid couplers and phase shifters if and only if it is a reciprocal matched network with transmission scattering matrix $\bar{\mathbf{S}}\in\mathbb{C}^{N\times N}$ that can be decomposed as
\begin{equation}
\bar{\mathbf{S}}=\mathbf{P}_{L}\mathbf{D}_{L}\cdots\mathbf{P}_{2}\mathbf{D}_{2}\mathbf{P}_{1}\mathbf{D}_{1}\mathbf{P}_{0},\label{eq:theor}
\end{equation}
where $\mathbf{P}_{\ell}\in\mathbb{C}^{N\times N}$ is a permutation matrix, for $\ell=0,\ldots,L$, and $\mathbf{D}_{\ell}\in\mathbb{C}^{N\times N}$ is a block diagonal matrix, for $\ell=1,\ldots,L$, partitioned as
\begin{equation}
\mathbf{D}_{\ell}=\text{diag}\left(\mathbf{D}_{\ell,1},\ldots,\mathbf{D}_{\ell,C_\ell},e^{j\phi_{\ell,1}},\ldots,e^{j\phi_{\ell,S_\ell}}\right),\label{eq:Dell}
\end{equation}
where $\phi_{\ell,s}\in[0,2\pi)$, for $s=1,\ldots,S_\ell$, and $\mathbf{D}_{\ell,c}\in\mathbb{C}^{2\times2}$ is given by
\begin{equation}
\mathbf{D}_{\ell,c}=\frac{1}{\sqrt{2}}
\begin{bmatrix}
e^{j\theta_{\ell,c,11}} & e^{j\theta_{\ell,c,12}}\\
e^{j\theta_{\ell,c,21}} & e^{j\theta_{\ell,c,22}}
\end{bmatrix},\label{eq:Dellc}
\end{equation}
with $\theta_{\ell,c,11},\theta_{\ell,c,12},\theta_{\ell,c,21}\in[0,2\pi)$ and $\theta_{\ell,c,22}=\pi-\theta_{\ell,c,11}+\theta_{\ell,c,12}+\theta_{\ell,c,21}$, for $c=1,\ldots,C_\ell$, where $2C_\ell+S_\ell=N$.
\label{the:1}
\end{theorem}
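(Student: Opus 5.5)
We prove the two directions separately. Throughout, every implementable network is reciprocal and matched by Definitions~\ref{def:series} and~\ref{def:parallel}, so only the transmission scattering matrix $\bar{\mathbf{S}}$ needs to be tracked. Call a matrix \emph{admissible} if it admits a decomposition of the form \eqref{eq:theor}.

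\emph{Necessity.} We argue by induction on the structure of Definition~\ref{def:impl} and show that every implementable $\bar{\mathbf{S}}$ is admissible.

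For the base cases we exhibit explicit decompositions:
\begin{itemize}
\item An interconnection or a phase shifter has $N=1$. We take $L=1$, $C_1=0$, $S_1=1$ with phase $0$ or $\phi$, respectively, and trivial permutations.
\item A permutation network $\mathbf{P}$ is obtained by taking $\mathbf{P}_0=\mathbf{P}$ and a single $\mathbf{D}_1=\mathbf{I}$, made of $N$ phase shifters of phase $0$.
\item A hybrid coupler is a single block $\mathbf{D}_{1,1}$ with $\theta_{11}=\pi/2$, $\theta_{12}=\theta_{21}=0$. Then $\theta_{22}=\pi-\pi/2+0+0=\pi/2$, which gives $e^{j\theta_{22}}=j$ as required.
\end{itemize}

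For the series case, $\bar{\mathbf{R}}\bar{\mathbf{Q}}$ is the concatenation of the two factorizations. The adjacent permutations $\mathbf{P}_0^{(R)}\mathbf{P}_L^{(Q)}$ merge into one permutation, so the product is again admissible.

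For the parallel case we must write $\mathrm{diag}(\bar{\mathbf{Q}},\bar{\mathbf{R}})$ in the form \eqref{eq:theor}. We first pad the shorter factorization to the same length $L$ by inserting layers $\mathbf{D}=\mathbf{I}$ (all phase shifters at $0$) and identity permutations. Then $\mathrm{diag}(\mathbf{P}^Q_\ell,\mathbf{P}^R_\ell)$ is a permutation. However, $\mathrm{diag}(\mathbf{D}^Q_\ell,\mathbf{D}^R_\ell)$ has the $2\times2$ blocks of $\mathbf{D}^R_\ell$ located after the scalar phases of $\mathbf{D}^Q_\ell$, which violates the ordering required in \eqref{eq:Dell}. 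We fix this with a permutation $\mathbf{\Pi}_\ell$ that moves all $2\times2$ blocks to the front, writing $\mathrm{diag}(\mathbf{D}^Q_\ell,\mathbf{D}^R_\ell)=\mathbf{\Pi}_\ell^T\mathbf{D}'_\ell\mathbf{\Pi}_\ell$ with $\mathbf{D}'_\ell$ in canonical form. The factors $\mathbf{\Pi}_\ell$ and $\mathbf{\Pi}_\ell^T$ are then absorbed into the neighbouring permutations.

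\emph{Sufficiency.} Given an admissible $\bar{\mathbf{S}}$, each $\mathbf{P}_\ell$ is a permutation network, and $\bar{\mathbf{S}}$ is the series of these with the $\mathbf{D}_\ell$. It therefore suffices to implement every $\mathbf{D}_\ell$, and since $\mathbf{D}_\ell$ is block diagonal, it is a parallel of its blocks. Scalars $e^{j\phi}$ are phase shifters, so the key step is to implement a generic $\mathbf{D}_{\ell,c}$.

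We write the hybrid coupler matrix as $\mathbf{H}=[[j,1]^T,[1,j]^T]/\sqrt{2}$ and look for a decomposition
\begin{equation*}
\mathbf{D}_{\ell,c}=\mathrm{diag}(e^{j\alpha_1},e^{j\alpha_2})\,\mathbf{H}\,\mathrm{diag}(e^{j\beta_1},e^{j\beta_2}).
\end{equation*}
Comparing the four entries gives the conditions
\begin{align*}
\alpha_1+\beta_1+\pi/2&=\theta_{11}, & \alpha_1+\beta_2&=\theta_{12},\\
\alpha_2+\beta_1&=\theta_{21}, & \alpha_2+\beta_2+\pi/2&=\theta_{22}.
\end{align*}
This linear system in four unknowns has rank $3$. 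It is consistent exactly when
\begin{equation*}
\theta_{22}=\theta_{12}+\theta_{21}-\theta_{11}+\pi \pmod{2\pi},
\end{equation*}
which is precisely the stated constraint on $\theta_{\ell,c,22}$. A solution is obtained by taking $\beta_1=0$, which gives $\alpha_2=\theta_{21}$, $\alpha_1=\theta_{11}-\pi/2$, and $\beta_2=\theta_{12}-\theta_{11}+\pi/2$, all reduced mod $2\pi$. Each diagonal factor is a parallel of two phase shifters, so $\mathbf{D}_{\ell,c}$ is the series of three implementable networks.

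\emph{Expected obstacle.} The main difficulty is the parallel case of necessity, namely the bookkeeping needed to reorder blocks into the canonical form \eqref{eq:Dell} and to equalize the depths $L$. Both steps are conceptually trivial but require explicit block-permutation identities. A secondary issue is the need to reduce phases mod $2\pi$ so that they lie in $[0,2\pi)$.
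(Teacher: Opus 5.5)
Your proof is correct, and most of it coincides with the paper's argument. The base cases, the series case and the sufficiency direction match. With $\beta_1=0$, your factorization of $\mathbf{D}_{\ell,c}$ is the paper's series of three networks: a phase shifter in parallel with an interconnection, a hybrid coupler, and two phase shifters in parallel. The only variation is which port carries the interconnection. Your permutation base case with a dummy identity layer is equivalent to the paper's choice $L=0$, i.e., $\bar{\mathbf{S}}=\mathbf{P}_0$.

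The genuine difference is the parallel case of necessity.

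\emph{The paper's route.} It never merges two factorizations layer by layer. Instead, it writes $\mathrm{diag}(\bar{\mathbf{Q}},\bar{\mathbf{R}})$ as the series of $\mathrm{diag}(\bar{\mathbf{Q}},\mathbf{I})$ and $\mathrm{diag}(\mathbf{I},\bar{\mathbf{R}})$. It then proves one auxiliary lemma: the parallel of an admissible matrix with an identity is admissible. In that lemma each layer is extended by zero phases (interconnections) and each permutation by an identity. Finally it reuses the series case already established. This removes your depth-equalization step and keeps the per-layer work to a simple extension, which suits a structural induction in Lean. The price is a decomposition with $L_Q+L_R$ layers.

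\emph{Your route.} You pad both factorizations to a common depth and merge them layerwise, using the conjugating permutations $\mathbf{\Pi}_\ell$. This needs more bookkeeping, as you anticipate. In exchange it gives a shallower decomposition with only $\max(L_Q,L_R)$ layers. That matches the physical picture of two networks working side by side and is closer to the layer-by-layer argument of the original work.
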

\begin{proof}
Please, refer to \cite[Section~IV]{ner26}.
\end{proof}

Theorem~\ref{the:1} was used in \cite{ner26} to show that the $N\times N$ \gls{dft}, Hadamard, and Haar matrices are implementable with hybrid couplers and phase shifters for any $N$ power of two.
In the following, we formalize all the above definitions and Theorem~\ref{the:1} in Lean, together with the implementability of the \gls{dft}.
The complete code is available in the repository linked in the abstract.

\section{Lean Formalization: Definitions}

Lean is a functional programming language and a proof assistant, in which mathematical statements can be written and their proofs verified by a trusted kernel \cite{dem21}.
Our formalization builds on Mathlib \cite{the20}, the Lean library that provides widely used mathematical concepts such as matrices, permutations, and the complex exponential.
To trust that the proofs are correct, a reader does not need to check them, but only that the definitions and the statements of the theorems presented in the following capture the intended concepts.
The keyword \texttt{noncomputable} only indicates that a definition involves real numbers or the complex exponential, which Lean cannot evaluate exactly, and has no effect on the proofs.

We define a network as an $n\times n$ complex matrix.
Physically, this matrix represents the transmission scattering matrix of the microwave network having $n$ input and $n$ output ports.
Since all the considered components are reciprocal matched networks, and reciprocal matched networks are closed under series and parallel connections, the transmission scattering matrix is sufficient to describe all the networks of interest.
The following abbreviation is therefore used throughout the whole Lean script, simply meaning that a network is an $n\times n$ complex matrix.
\begin{lstlisting}
abbrev Network (n : ℕ) :=
  Matrix (Fin n) (Fin n) ℂ
\end{lstlisting}
An interconnection is defined as the $1\times 1$ matrix $1$.
\begin{lstlisting}
def interconnection : Network 1 :=
  1
\end{lstlisting}
A phase shifter is defined as the $1\times 1$ matrix $e^{j\phi}$.
\begin{lstlisting}
noncomputable def phaseShifter (φ : ℝ) :
    Network 1 :=
  fun _ _ => Complex.exp (φ * Complex.I)
\end{lstlisting}
Note that the phase shift $\phi$ is any real number, rather than being restricted to $[0,2\pi)$.
This is without loss of generality, since $e^{j\phi}$ is $2\pi$-periodic.
A permutation network is defined as an $n\times n$ permutation matrix, constructed by Mathlib from a permutation $\sigma$ of the ports.
\begin{lstlisting}
def permutationNetwork {n : ℕ}
    (σ : Equiv.Perm (Fin n)) : Network n :=
  Equiv.Perm.permMatrix (R := ℂ) σ
\end{lstlisting}
A hybrid coupler is defined as the $2\times 2$ matrix $[[j,1]^T,[1,j]^T]/\sqrt{2}$.
\begin{lstlisting}
noncomputable def hybridCoupler : Network 2 :=
  (1 / Real.sqrt 2 : ℂ) • !![Complex.I, 1;
                             1, Complex.I]
\end{lstlisting}

The series of two networks \texttt{A} and \texttt{B} with $n$ input and output ports, where \texttt{A} is followed by \texttt{B}, is defined as the matrix product \texttt{B * A}, following Definition~\ref{def:series}.
\begin{lstlisting}
def series {n : ℕ} (A B : Network n) :
    Network n :=
  B * A
\end{lstlisting}
The parallel of a network \texttt{A} with $n$ input and output ports and a network \texttt{B} with $m$ input and output ports is defined as their block diagonal matrix, following Definition~\ref{def:parallel}.
\begin{lstlisting}
def parallel {n m : ℕ} (A : Network n)
    (B : Network m) : Network (n + m) :=
  Matrix.reindex finSumFinEquiv finSumFinEquiv
      (Matrix.fromBlocks A 0 0 B)
\end{lstlisting}

The notion of implementable network in Definition~\ref{def:impl} is formalized as an inductive predicate \texttt{Implementable}, with one constructor for each of its six statements.
A network is implementable if and only if it can be obtained by applying these six rules a finite number of times.
\begin{lstlisting}
inductive Implementable : {n : ℕ} → Network n
    → Prop where
  | ic : Implementable interconnection
  | ps (φ : ℝ) :
      Implementable (phaseShifter φ)
  | pn {n : ℕ} (σ : Equiv.Perm (Fin n)) :
      Implementable (permutationNetwork σ)
  | hc : Implementable hybridCoupler
  | series {n : ℕ} {A B : Network n} :
      Implementable A → Implementable B →
      Implementable (series A B)
  | parallel {n m : ℕ}
      {A : Network n} {B : Network m} :
      Implementable A → Implementable B →
      Implementable (parallel A B)
\end{lstlisting}
Since this definition is inductive, a property can be proved for all implementable networks by induction, i.e., by proving it for the four components and showing that it is preserved by series and parallel connections.
This is the key tool in the proof of Theorem~\ref{the:1}.

\section{Lean Formalization: Main Result}

We begin by defining as \texttt{layered} a network that satisfies the condition in Theorem~\ref{the:1}.
This is done by first introducing more basic building blocks.
The first building block is the diagonal matrix of phases
\begin{equation}
\text{diag}\left(e^{j\phi_{1}},\ldots,e^{j\phi_{n}}\right),\label{eq:phaseDiag}
\end{equation}
representing $n$ phase shifters in parallel, which is defined as follows.
\begin{lstlisting}
noncomputable def phaseDiagonal {n : ℕ}
    (φ : Fin n → ℝ) : Network n :=
  Matrix.diagonal (fun i =>
      Complex.exp (φ i * Complex.I))
\end{lstlisting}
The second building block is the $2\times 2$ block in~\eqref{eq:Dellc}, i.e.,
\begin{equation}
\frac{1}{\sqrt{2}}
\begin{bmatrix}
e^{j\theta_{11}} & e^{j\theta_{12}}\\
e^{j\theta_{21}} & e^{j(\pi-\theta_{11}+\theta_{12}+\theta_{21})}
\end{bmatrix},\label{eq:block}
\end{equation}
which depends on three parameters $\theta_{11}$, $\theta_{12}$, and $\theta_{21}$, and is defined as follows.
\begin{lstlisting}
noncomputable def block ( θ₁₁  θ₁₂  θ₂₁ : ℝ) :
    Network 2 :=
  (1 / Real.sqrt 2 : ℂ) • !![
    Complex.exp ( θ₁₁ * Complex.I),
    Complex.exp ( θ₁₂ * Complex.I);
    Complex.exp ( θ₂₁ * Complex.I),
    Complex.exp ((Real.pi - θ₁₁ + θ₁₂ + θ₂₁ )
        * Complex.I)]
\end{lstlisting}
These two building blocks are combined into the block diagonal matrix $\mathbf{D}_\ell$ in \eqref{eq:Dell}, made of $C$ blocks \eqref{eq:block} and $S$ phases, which is a network with $2C+S$ input and output ports.
The $C$ blocks are assembled with Mathlib's \texttt{Matrix.blockDiagonal}, and placed in parallel with the diagonal matrix of phases.
\begin{lstlisting}
noncomputable def blockDiagonal {C S : ℕ}
    ( θ₁₁  θ₁₂  θ₂₁ : Fin C → ℝ) (φ : Fin S → ℝ) :
    Network (2 * C + S) :=
  parallel (Matrix.reindex finProdFinEquiv
      finProdFinEquiv (Matrix.blockDiagonal
      fun c => block ( θ₁₁ c) ( θ₁₂ c) ( θ₂₁ c)))
      (phaseDiagonal φ)
\end{lstlisting}
Note that, with this indexing, the two ports of the $c$th block are $c$ and $C+c$, rather than two adjacent ports as in \eqref{eq:Dell}.
The two orderings differ by a permutation, which is absorbed by the adjacent permutation matrices, so the set of matrices of the form \eqref{eq:theor} is unchanged.

A layer $\mathbf{P}_\ell\mathbf{D}_\ell$ of the decomposition \eqref{eq:theor} is described by the structure \texttt{Layer}, which collects the number of blocks $C_\ell$, the number of phases $S_\ell$, the parameters of the $C_\ell$ blocks and the $S_\ell$ phases, the constraint $2C_\ell+S_\ell=n$, and the permutation $\sigma$ defining $\mathbf{P}_\ell$.
\begin{lstlisting}
structure Layer (n : ℕ) where
  C : ℕ
  S : ℕ
  θ₁₁ : Fin C → ℝ
  θ₁₂ : Fin C → ℝ
  θ₂₁ : Fin C → ℝ
  φ : Fin S → ℝ
  size : 2 * C + S = n
  σ : Equiv.Perm (Fin n)
\end{lstlisting}
Finally, the product $\mathbf{P}_{L}\mathbf{D}_{L}\cdots\mathbf{P}_{1}\mathbf{D}_{1}\mathbf{P}_{0}$ in \eqref{eq:theor} is defined recursively from the initial permutation $\sigma_0$, defining $\mathbf{P}_0$, and a list of layers, ordered from $L$ down to $1$.
An empty list gives $\mathbf{P}_0$, while a list with head $\mathbf{P}_\ell\mathbf{D}_\ell$ gives $\mathbf{P}_\ell\mathbf{D}_\ell$ multiplied by the product of the remaining layers.
The number of layers $L$ is the length of the list.
\begin{lstlisting}
noncomputable def layered {n : ℕ}
    ( σ₀ : Equiv.Perm (Fin n)) :
    List (Layer n) → Network n
  | [] => permutationNetwork σ₀
  | layer :: layers =>
      permutationNetwork layer.σ *
      Matrix.reindex (finCongr layer.size)
      (finCongr layer.size)
      (blockDiagonal layer. θ₁₁  layer. θ₁₂
      layer. θ₂₁  layer.φ) * layered σ₀ layers
\end{lstlisting}

The sufficient condition of Theorem~\ref{the:1} states that any \texttt{layered} network is implementable.
\begin{lstlisting}
theorem implementable_if_layered {n : ℕ}
    ( σ₀ : Equiv.Perm (Fin n))
    (layers : List (Layer n)) :
    Implementable (layered σ₀ layers) := by
  ...
\end{lstlisting}
The statement of this sufficient condition is followed by the keyword \texttt{:= \textcolor{leanblue}{by}}, after which Lean expects the proof.
This proof is omitted here for brevity, but it can be found in the GitHub repository, where it has been checked by Lean.
As in~\cite[Section~IV-A]{ner26}, the key step is that each block \eqref{eq:block} is a series of three implementable networks, i.e., the parallel of a phase shifter and an interconnection, a hybrid coupler, and the parallel of two phase shifters.
Then, the diagonal matrix of phases \eqref{eq:phaseDiag} is shown to be implementable by induction on $n$, and consequently also the block diagonal matrix $\mathbf{D}_\ell$ is implementable.
Finally, a \texttt{layered} network is shown to be implementable, using the fact that the series of implementable networks is implementable, and that permutation networks are implementable.

The necessary condition of Theorem~\ref{the:1} states that any implementable network is \texttt{layered}.
\begin{lstlisting}
theorem implementable_only_if_layered {n : ℕ}
    {S : Network n} (hS : Implementable S) :
    ∃ σ₀ : Equiv.Perm (Fin n),
    ∃ layers : List (Layer n),
    S = layered σ₀ layers := by
  ...
\end{lstlisting}
Also for this theorem we omit the proof after the keyword \texttt{:= \textcolor{leanblue}{by}}, but it can be found in the GitHub repository, where it has been checked by Lean.
In \cite[Section~IV-B]{ner26}, this is proved by dividing a generic network into $L$ subnetworks, each containing hybrid couplers and phase shifters in parallel between two permutation networks.
This argument is intuitive, but it is not directly expressed in terms of Definition~\ref{def:impl}.
The Lean proof is instead by induction on Definition~\ref{def:impl}, and considers the six cases separately.
\begin{enumerate}
\item A phase shifter is a \texttt{layered} network with a single layer containing one phase and no blocks.
\item An interconnection is a \texttt{layered} network since it is a phase shifter with zero phase.
\item A permutation network is a \texttt{layered} network with only $\mathbf{P}_0$ and no layers.
\item A hybrid coupler is a \texttt{layered} network with a single layer containing the block \eqref{eq:block} with $\theta_{11}=\pi/2$ and $\theta_{12}=\theta_{21}=0$.
\item The series of two \texttt{layered} networks is \texttt{layered}, since their layers can be concatenated, merging the last permutation matrix of the first network with the first permutation matrix of the second into a single permutation matrix.
\item The parallel of two \texttt{layered} networks \texttt{A} and \texttt{B} is \texttt{layered}, since it can be written as the series of the parallel of \texttt{A} with an identity and the parallel of an identity with \texttt{B}.
The parallel of a \texttt{layered} network with an identity is \texttt{layered}, since each of its layers can be extended with zero phases, i.e., interconnections.
\end{enumerate}

By combining the sufficient and necessary conditions, we obtain Theorem~\ref{the:1} in Lean, which states that a network is implementable if and only if it is \texttt{layered}.
\begin{lstlisting}
theorem implementable_iff_layered {n : ℕ}
    {S : Network n} :
    Implementable S ↔
    ∃ σ₀ : Equiv.Perm (Fin n),
    ∃ layers : List (Layer n),
    S = layered σ₀ layers := by
  constructor
  · exact implementable_only_if_layered
  · rintro ⟨ σ₀ , layers, rfl⟩
    exact implementable_if_layered σ₀ layers
\end{lstlisting}
The theorem holds for any number of ports $n$ and any number of layers $L$, and its proof is certified by the Lean kernel.

\balance

\section{Lean Formalization: Implementing the DFT}

We now formalize the proof that the \gls{dft} can be computed by a \gls{milac} made of hybrid couplers and phase shifters, i.e., that the $N\times N$ \gls{dft} matrix is implementable for any $N$ power of two \cite[Corollary~3]{ner26}.
The $N\times N$ \gls{dft} matrix is defined as
\begin{equation}
[\mathbf{F}_N]_{i,k} = \frac{1}{\sqrt{N}}\omega^{ik},
\end{equation}
for $i,k=0,\ldots,N-1$, where $\omega = e^{-j 2\pi/ N}$.
Note that the indices start from $0$ here, in agreement with Lean.
\begin{lstlisting}
noncomputable def dft (N : ℕ) : Network N :=
  fun i k => (1 / Real.sqrt N : ℂ) *
      Complex.exp (-2 * Real.pi * Complex.I *
      (i : ℕ) * (k : ℕ) / N)
\end{lstlisting}
The following theorem states that the $2^L\times 2^L$ \gls{dft} matrix is implementable, for any $L\in\mathbb{N}$.
\begin{lstlisting}
theorem implementable_dft (L : ℕ) :
    Implementable (dft (2 ^ L)) := by
  ...
\end{lstlisting}
In \cite[Section~V]{ner26}, this result is obtained by deriving a decomposition of $\mathbf{F}_N$ of the form \eqref{eq:theor}.
The Lean proof instead proceeds by induction on $L$, following the radix-2 structure of the \gls{fft}.
For $L=0$, $\mathbf{F}_1=1$ is an interconnection.
For the inductive step, the $2^{L+1}\times 2^{L+1}$ \gls{dft} matrix is shown to be, up to a permutation of its input ports and a permutation of its output ports, the series of two $2^{L}\times 2^{L}$ \gls{dft} matrices in parallel, followed by $2^{L}$ butterfly matrices $[[1,1]^T,[\omega^{i},-\omega^{i}]^T]/\sqrt{2}$ in parallel, for $i=0,\ldots,2^{L}-1$, where $\omega=e^{-j2\pi/2^{L+1}}$.
Each butterfly matrix is a block \eqref{eq:block} with suitable phases, and is therefore implementable.
The two smaller \gls{dft} matrices are implementable by the induction hypothesis, and the permutations of the input and output ports are implemented by two permutation networks, which concludes the proof.
By Theorem~\ref{the:1}, \texttt{implementable\_dft} also guarantees that $\mathbf{F}_{2^L}$ admits a decomposition of the form~\eqref{eq:theor}, although it does not identify the specific matrices $\mathbf{P}_\ell$ and $\mathbf{D}_\ell$ given in \cite[Proposition~3]{ner26}.

\section*{Acknowledgment}

The authors used Claude (Anthropic) to assist in developing the Lean code, whose proofs are all checked by the Lean kernel.
The authors also thank the Mathlib community for developing and maintaining the Mathlib library \cite{the20}, on which this formalization builds.

\section{Conclusion}

We have presented a first formalization of analog computing with microwaves in Lean.
We have formalized hybrid couplers, phase shifters, interconnections, and permutation networks, together with their series and parallel connections, and the class of linear transforms they can compute.
Specifically, we have formally proved the necessary and sufficient condition characterizing these transforms, derived in \cite{ner26}, and the implementability of the \gls{dft} of any size power of two.

This work opens several directions for future research.
\textit{First}, the implementability of the Hadamard and Haar transforms \cite{ner26}, and maybe other transforms, can be formalized.
\textit{Second}, the scattering matrices of generic networks (not necessarily matched) and their series and parallel connections can be formalized.
\textit{Third}, other microwave components can be included to characterize the functions they can compute.
\textit{Fourth}, the formalized results can be used to automatically generate verified designs of microwave networks computing a desired transformation.

\bibliographystyle{IEEEtran}
\bibliography{IEEEabrv,main}

\end{document}